\theoremstyle{plain}
\newtheorem{thm}{\protect\theoremname}
\theoremstyle{definition}
\newtheorem{defn}[thm]{\protect\definitionname}
\theoremstyle{remark}
\newtheorem{rem}[thm]{\protect\remarkname}
\theoremstyle{definition}
\theoremstyle{plain}
\newtheorem{lem}[thm]{\protect\lemmaname}
\theoremstyle{plain}
\newtheorem{prop}[thm]{\protect\propositionname}
\providecommand{\definitionname}{Definition}
\providecommand{\examplename}{Example}
\providecommand{\lemmaname}{Lemma}
\providecommand{\remarkname}{Remark}
\providecommand{\theoremname}{Theorem}
\providecommand{\propositionname}{Proposition}
\begin{document}

\title{HFR Code: A Flexible Replication Scheme for Cloud Storage Systems}

\author{Bing~Zhu, Hui~Li$^{\dagger}$,~\IEEEmembership{Member,~IEEE,} Kenneth~W.~Shum,~\IEEEmembership{Member,~IEEE,} and \\ Shuo-Yen~Robert~Li,~\IEEEmembership{Fellow,~IEEE}

\thanks{A preliminary version of this work was presented at the 2014 National Conference on Communications \cite{key-0}.}

\thanks{B. Zhu is with the School of Electronic and Computer Engineering, Peking University, Shenzhen 518055, P. R. China (e-mail: zhubing@sz.pku.edu.cn).}
\thanks{H. Li is with the Institute of Big Data Technologies and Shenzhen Engineering Lab of Converged Networks Technology, Peking University Shenzhen Graduate School, Shenzhen 518055, P. R. China (e-mail: lih64@pkusz.edu.cn).}
\thanks{K. W. Shum is with the Institute of Network Coding, The Chinese University of Hong Kong, Shatin, Hong Kong (e-mail: wkshum@inc.cuhk.edu.hk).}
\thanks{S.-Y. R. Li is with the University of Electronic Science and Technology of China, Chengdu 611731, P. R. China and Institute of Network Coding, The Chinese University of Hong Kong, Shatin, Hong Kong (e-mail: bobli@ie.cuhk.edu.hk).}
\thanks{$^{\dagger}$Corresponding author.}}

\maketitle

\begin{abstract}
Fractional repetition (FR) codes are a family of repair-efficient storage codes that provide exact and uncoded node repair at the minimum bandwidth regenerating point. The advantageous repair properties are achieved by a tailor-made two-layer encoding scheme which concatenates an outer maximum-distance-separable (MDS) code and an inner repetition code. In this paper, we generalize the application of FR codes and propose heterogeneous fractional repetition (HFR) code, which is adaptable to the scenario where the repetition degrees of coded packets are different. We provide explicit code constructions by utilizing group divisible designs, which allow
the design of HFR codes over a large range of parameters. The constructed codes achieve the system storage capacity under random access repair and have multiple repair alternatives for node failures. Further, we take advantage of the systematic feature of MDS codes and present a novel design framework of HFR codes, in which storage nodes can be wisely partitioned into clusters such that data reconstruction time can be reduced when contacting nodes in the same cluster.
\end{abstract}

\begin{IEEEkeywords}
Cloud storage systems, regenerating codes, repair bandwidth, uncoded repair, combinatorial designs.
\end{IEEEkeywords}

\IEEEpeerreviewmaketitle

\begin{spacing}{1.5}

\section{Introduction}

Cloud storage has emerged in recent years as a promising paradigm for storing large amounts of data, and is currently backed by practical cloud storage services such as Amazon S3 and Windows Azure. In a cloud storage system, the original data file is spread across a large number of distinct nodes such that a client can access the data even in the presence of node failures. This distributed storage manner has been widely employed in enterprise systems such as Google File System (GFS)~\cite{key-1}, Hadoop Distributed File System (HDFS)~\cite{key-2}, etc. However, the storage nodes in the cloud system have reached such a massive scale that failures are the norm rather than exception. In order to provide reliable storage over unreliable nodes, redundancy should be introduced to cloud storage systems. The simplest strategy is replication in which the file is available as long as one of the copies remains intact. For example, each data chunk in GFS is replicated several times, with the minimum being three~\cite{key-1}.

For the same level of redundancy, erasure coding techniques can improve data reliability when compared to replication \cite{key-3}. In an erasure coding mechanism, the original data object is encoded and spread over $n$ different nodes in a distributed fashion. A data collector (DC) can reconstruct the source file by attaching to arbitrary $k$ out of the $n$ storage nodes. In other words, the system can tolerate up to $n-k$ concurrent node failures without data loss. Reed-Solomon (RS) code for instance, a classical kind of MDS codes, has been implemented on top of the HDFS for old data archiving \cite{key-4}. Upon failure of a node, the conventional method is to reconstruct the entire source file, re-encode and deliver the encoded block to a replacement node. This process, termed \textit{node repair}, results in large bandwidth overheads since only a small fraction of the downloaded data is stored eventually.

Regenerating code~\cite{key-5} is a class of erasure codes with the capability to minimize the amount of data stored per node (denoted by $\alpha$) and the bandwidth consumption during the repair process. When a node fails, it can be regenerated by contacting any subset of $d$ surviving nodes and downloading $\beta$ packets from each node. These $d$ nodes that participate in the repair process are referred to as \textit{helpers}. For the case that the data stored on the replacement node is identical to that on the failed node, the repair model is called exact repair. Functional repair, in contrast, differs from exact repair that the content of the new node need not be the same as in the failed node, as long as the functionality of the failed node is retained after repair. Furthermore, based on the storage-bandwidth trade-off derived in~\cite{key-5}, two extreme points are obtained by minimizing $\alpha$ and $\beta$ respectively, which correspond to the minimum storage regenerating (MSR) point and minimum bandwidth regenerating (MBR) point. We refer the readers to~\cite{key-6} for more details of regenerating codes (e.g., explicit code constructions and performance evaluations).

Although regenerating codes can reduce the repair overhead as compared to traditional erasure codes, they suffer from the drawback that the repair process is of high complexity that each contacted node has to read all the data it stored in order to transfer a linear combination to the replacement node. Motivated by this, a simplified repair scheme at the MBR point was first introduced in~\cite{key-7} and later extended to a new family of codes with a table-based repair model, termed fractional repetition (FR) codes~\cite{key-8}. Specifically, the repair process of FR codes is characterized by the \textit{uncoded repair} property: a surviving node reads and delivers the exact amount of data to the replacement node without extra arithmetic operations. The desirable repair property is derived from the relaxation in choosing helper nodes: the condition that any $d$ surviving nodes can be contacted for repair (random access repair) is relaxed to specific subsets of $d$ nodes (table-based repair). Clearly, uncoded repair is optimal in terms of repair bandwidth, disk input/output (I/O) and computational complexity, which makes FR codes attractive in practical storage systems.

\subsection{Main Contributions}

In this paper, we focus on a class of erasure codes that provide exact and uncoded repair at the MBR point. Specifically, we propose a flexible replication scheme by relaxing the constraint on repetition degree in conventional FR codes and taking different types of repetition degrees into consideration. We refer to these new codes as \textit{heterogeneous fractional repetition (HFR)} codes. We present explicit constructions of HFR codes from combinatorial designs, in particular from group divisible designs. The designed codes achieve the system storage capacity under the random access repair model and allow to have multiple repair alternatives in the presence of node failures. Furthermore, we present a novel design framework of HFR codes in which the outer MDS code is always made systematic (i.e., one copy of the original data remains in uncoded form). In this framework, we wisely partition the storage nodes into several clusters such that in normal cases, the source file can be read directly from nodes in the same cluster.

\subsection{Organization}

The rest of this paper is organized as follows. Section II reviews the related works on erasure codes that provide uncoded repair. Section III presents the necessary notations and definitions. Section IV introduces the encoding scheme of HFR codes. Section V presents explicit constructions of HFR codes and a novel design framework based on HFR codes. Finally, Section VI concludes this paper with general remarks.

\section{Related Works}

In this section, we review a number of closely related works, such as repair-by-transfer codes~\cite{key-9} and fractional repetition codes~\cite{key-8}.

\subsection{Repair-by-transfer Codes}

In order to simplify the repair process of conventional regenerating codes, the idea of repair-by-transfer is introduced in~\cite{key-9}, where a failed node is repaired by simple transfer of data without the need for any computations. We note that the concept of repair-by-transfer is essentially the same as uncoded repair. The advantageous properties of repair-by-transfer include reduced repair complexity and minimum disk reads at helper nodes. In particular, for the case that $d=n-1$ (i.e., all the surviving nodes participate in the repair process), Shah \textit{et al.} presented in~\cite{key-9} an explicit construction of exact MBR codes with repair-by-transfer via a fully connected graph. In a recent work~\cite{key-10}, a new class of repair-by-transfer codes is proposed based on congruent transformations, in which the minimum field size and encoding complexity of code construction are superior as compared with~\cite{key-9}. Moreover, a different line that focuses on the functional repair model, called functional-repair-by-transfer regenerating code, is presented in~\cite{key-11} and implemented in practical cloud storage systems \cite{key-12}.

\subsection{Fractional Repetition Codes}

Fractional repetition codes are a class of erasure codes characterized by an exact and uncoded repair regime. A tailor-made two-layer encoding process guarantees the desirable repair properties. Specifically, a source file is taken as an input to a pre-determined MDS code, where several coded packets are generated as output. Each coded packet is then replicated $\rho$ times and distributed over distinct storage nodes. The policy that specifies the placement of all coded packets, is called the \textit{repetition code}. As shown in~\cite{key-8}, an FR code with repetition degree $\rho$ can tolerate up to $\rho-1$ concurrent node failures without loss of the exact and uncoded repair property.

We illustrate the construction of FR codes via a simple example. Consider a file composed of $M=6$ packets, we adopt a $(7,6)$ MDS code that outputs $7$ coded packets indexed from $1$ to $7$. Each packet is replicated $\rho=3$ times and spread over $n=7$ distinct nodes, as shown in Figure~\ref{figure1}. Any DC who contacts a subset of $k=3$ nodes can obtain at least 6 distinct packets, which can be used to reconstruct the source file. To illustrate the core idea of exact and uncoded repair, let us assume without loss of generality that node $N_{1}$ is failed, the lost packets can be recovered directly by downloading the corresponding replicas stored on other $d=3$ surviving nodes (e.g., nodes $N_{2}$, $N_{3}$ and $N_{4}$). It is noteworthy that the repairs of FR codes are table-based~\cite{key-8}, which implies that a specific set of nodes (rather than any $d$ nodes) need to be contacted, depending on which node has failed.

\begin{figure}[!t]
\centering
\includegraphics[scale=0.145]{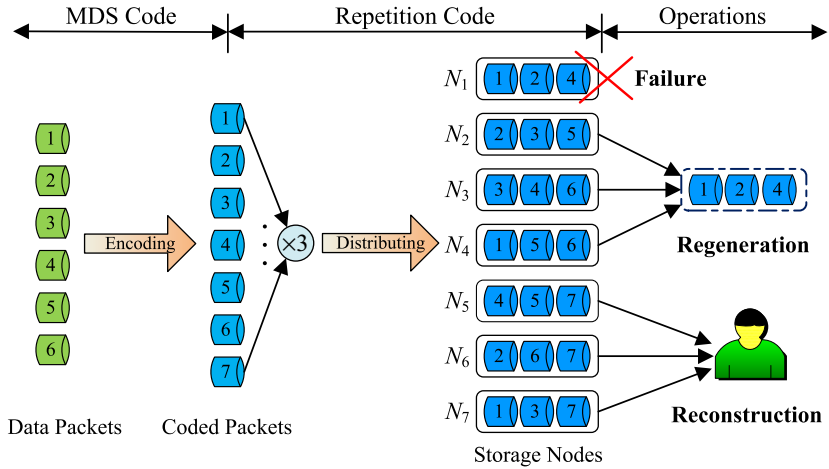}
\caption{A fractional repetition code with repetition degree $\rho=3$.}
\label{figure1}
\end{figure}

Note that MDS codes have been historically investigated and there exist many available constructions in the literature, the main challenge of FR codes lies in the design of the inner repetition code. Existing constructions of FR codes are primarily derived from combinatorial designs. In the pioneer work~\cite{key-8}, an explicit code construction is presented based on regular graphs for the case of single node failure and Steiner systems are adopted in the scenario of multiple failures. In~\cite{key-14}, Koo and Gill considered projective geometries (e.g., bipartite cage graphs and mutually orthogonal Latin squares) in the construction of FR codes. Subsequently, the authors explored in~\cite{key-15} code constructions using resolvable designs, where the packet repetition degree can be varied over a large range. They further considered the usage of Kronecker product to devise new FR codes from existing codes~\cite{key-15-1}. FR codes based on combinatorial configurations are discussed in~\cite{key-16}. Further, optimal code constructions that attain the bounds on the fractional repetition capacity are presented in~\cite{key-17}, which are based on graph theory (e.g., biregular graphs, projective planes and generalized polygons).

Some recent studies propose codes with uncoded repair for heterogeneous systems, where the storage capacities of different nodes may not be the same. Gupta \textit{et al.}~\cite{key-18} introduce weak fractional repetition codes, where each storage node contains different number of packets. Based on the partial regular graph, their construction can tolerate single node failure. Yu \textit{et al.}~\cite{key-19} present irregular fractional repetition codes for heterogeneous storage networks, which can be used to minimize the system repair cost. In our recent work~\cite{key-20}, we propose general fractional repetition codes, which are adaptable to heterogeneous storage capacities and multiple node failures.

While prior works assume that each packet is replicated the same times after the outer MDS code, we consider a different dimension in this paper that the repetition degrees of coded packets can be different.

\section{Preliminaries}

\subsection{Distributed Storage Systems}

Consider a distributed storage system (DSS) with parameters $(n,k,d)$, where $n$ is the total number of nodes in the system, $k$ is the number of nodes that a data collector needs to contact for data reconstruction and $d$ is the repair degree that indicates the number of nodes a replacement node has to contact during the repair process. In normal cases, a system designer should choose the parameters that satisfy
\begin{equation}
k\leq d\leq n-1.
\end{equation}

In this paper, we focus on exact repair where the regenerated packets are identical to the lost packets. During the repair process, a replacement node downloads and stores exactly one packet (i.e., $\beta=1$) each from $d$ helper nodes it contacts, which implies that the storage capacity is $d$ for every node in the system. Under this model, the storage capacity of the DSS, denoted by $C_{MBR}$, is defined as the maximum file size that can be delivered by contacting any $k$ out of the $n$ nodes at the MBR point. In particular, for the case that $\beta=1$, the storage capacity $C_{MBR}$ of an $(n,k,d)$ DSS is shown in~\cite{key-5} to be

\begin{equation}
C_{MBR}=kd-\binom{k}{2}.
\label{capacity}
\end{equation}

It is noteworthy that this expression is derived from the functional repair model. However, Rashmi~\textit{et al.} showed in~\cite{key-7} that MBR codes with exact repair can still achieve the capacity $C_{MBR}$ of (\ref{capacity}).

\subsection{Group Divisible Designs}
\label{GDD}

\begin{defn}
A \textit{combinatorial design} (or simply a \textit{design}) is a pair $(\mathcal{V},\mathcal{B})$, where $\mathcal{V}$ is a set of elements called \textit{points}, and $\mathcal{B}$ is a collection (i.e., multi-set) of nonempty subsets of $\mathcal{V}$ called \textit{blocks}.
\end{defn}

Group divisible design (GDD) is an important academic branch of the design theory. In this subsection, we briefly present the definitions and properties of GDDs, which will be used in our constructions. For an overview of design theory, the reader is referred to~\cite{key-21} and~\cite{key-22}.

\begin{defn}
Let $\Psi$ and $\Theta$ denote sets of positive integers. A group divisible design of order $v$ (denoted by $\Psi$-GDD) is a triple $(\mathcal{V},\mathcal{G},\mathcal{B})$, where $\mathcal{V}$ is a finite set (the \textit{point set}) of cardinality $v$, $\mathcal{G}$ is a partition of $\mathcal{V}$ into parts (\textit{groups}) whose sizes lie in $\Theta$, and $\mathcal{B}$ is a family of subsets (\textit{blocks}) of $\mathcal{V}$ that satisfy the following properties: \end{defn}
\begin{enumerate}
\item $|\mathcal{G}|>1$.
\item If ${B}\in\mathcal{B}$, then $|{B}|\in\Psi$.
\item Every pair of points from distinct groups occurs in a unique block.
\end{enumerate}

If $\Psi=\{\psi\}$, then $\Psi$-GDD is abbreviated as $\psi$-GDD. If $v=u_{1}\theta_{1}+u_{2}\theta_{2}+\ldots+u_{s}\theta_{s}$ and if there are $u_{i}$ groups of size $\theta_{i}, i=1,2,\ldots,s$, then the $\Psi$-GDD is said to be of type $\theta_{1}^{u_{1}}\theta_{2}^{u_{2}}\cdots \theta_{s}^{u_{s}}$. For example, if $\mathcal{V}=\{1,\ldots,5\}$, $\mathcal{G}=\{\{1\},\{2,3\},\{4,5\}\}$ and $\mathcal{B}=\{\{2,5\},$ $\{3,4\},\{1,2,4\},\{1,3,5\}\}$, then $(\mathcal{V},\mathcal{G},\mathcal{B})$ is a $\{2,3\}$-GDD of type $1^{1}2^{2}$. Throughout this paper, we will use the exponential notation of GDDs.

\begin{rem}
Let $\mathcal{D}=(\mathcal{V},\mathcal{B})$ be a design with points $\mathcal{V}=\{p_{1},\ldots,p_{v}\}$ and blocks $\mathcal{B}=\{B_{1},\ldots,B_{b}\}$. The \textit{incidence matrix} $\mathbf{A}=(a_{ij})$ of $\mathcal{D}$ is a $v\times b$ matrix defined by $a_{ij}=1$ if $p_{i}$ belongs to $B_{j}$ and $a_{ij}=0$ otherwise. For example, the incidence matrix of the above $\{2,3\}$-GDD is

\[
\mathbf{A}=\left[\begin{array}{cccc}
0 & 0 & 1 & 1\\
1 & 0 & 1 & 0\\
0 & 1 & 0 & 1\\
0 & 1 & 1 & 0\\
1 & 0 & 0 & 1
\end{array}\right].
\]
\end{rem}

Note that if all the blocks in a GDD are of the same size, then the points from the same group belong to the same number of blocks, whereas the points from different groups may occur in different number of blocks. Specifically, given a $\psi$-GDD of type $\theta_{1}^{u_{1}}\theta_{2}^{u_{2}}\cdots \theta_{s}^{u_{s}}$, a point from the group of size $\theta_{i}$  is contained in $r_{i}$ blocks, where

\begin{equation}
r_{i}=\frac{[\sum_{j=1}^{s}u_{j}\theta_{j}]-\theta_{i}}{\psi-1}, i=1,2,\ldots,s.
\label{frequency}
\end{equation}

\section{Heterogeneous Fractional Repetition Codes}

In practical cloud storage systems, it is not necessary to have all the data chunks with the same repetition degree. Due to some special considerations, a fraction of the stored data chunks can be replicated more than others. We present here two observations:

\begin{itemize}
\item \textbf{Popularity:} For a data object that consists of a large number of chunks, the popularity of each data chunk can be different. From a system perspective, some chunks may be frequently accessed, while others are rarely accessed by users. Therefore, having the popular chunks with more replicas allows for parallel accesses.
\end{itemize}

\begin{itemize}
\item \textbf{Functionality:} In a practical deployment setting, the functionalities of different packets may not be the same. It is desirable to have more important packets with higher repetition degrees. For example, the systematic packets of an MDS code deserve higher repetition degrees, since they provide direct data reconstruction without decoding.
\end{itemize}

Heading towards this direction, we relax the constraint on the repetition degree in prior works and introduce \textit{heterogeneous fractional repetition (HFR)} codes, where the repetition degrees of different packets can be different. Note that in analogy with FR codes, the encoding process of HFR codes is formed by the concatenation of two components: an outer MDS code followed by an inner heterogeneous repetition code. However, the repetition degrees of coded packets, which are determined by the heterogeneous repetition code, are not the same.

In the encoding process of HFR codes, we first employ an outer $(\phi,M)$ MDS code to encode a source file consisting of $M$ packets, where $\phi$ coded packets are generated with indices $\{1,\ldots,\phi\}$ respectively. Further, we partition these $\phi$ coded packets into $\lambda$ subsets $P_{1},P_{2},\ldots,P_{\lambda}$ such that each packet from $P_{i}$ is replicated $\rho_{i}$ times, where $\rho_{i}\neq\rho_{j},\forall i,j\in\{1,\ldots,\lambda\},i\neq j$. After the replication operation, all coded packets and the replicas are distributed over a large number of distinct storage nodes, which may spread over a wide geographical area. The policy that specifies the placement of all the coded packets, is called the \textit{heterogeneous repetition code}.

We refer to a HFR code as a $\lambda$-HFR code if there exist $\lambda$ types of repetition degrees. It is clear that if $\lambda=1$, then the $\lambda$-HFR code reduces to an FR code.

\subsection{Definition}

We present now the formal definition of HFR codes. For a positive integer $n$, we use $[n]$ to denote the set $\{1,2,\ldots,$ $n\}$.

\begin{defn}
\label{Def_HFR}
Let $\lambda$ and $\phi$ be positive integers, $\rho_{1},\ldots,\rho_{\lambda}$ be distinct positive integers, and $\Phi_{1},\ldots,\Phi_{\lambda}$ be a partition of $[\phi]$. A $\lambda$-HFR code $\mathcal{C}$ for an $(n,k,d)$ DSS, is a collection of $n$ subsets $\xi_{1},\ldots,\xi_{n}$ of $[\phi]$ that satisfies the following conditions:
\end{defn}
\begin{enumerate}
\item The cardinality of each $\xi_{i}$ is $d$, i.e., $|\xi_{i}|=d, 1\leq i\leq n$.
\item Each point in $\Phi_{i}$ occurs in exactly $\rho_{i}$ subsets in $\mathcal{C}$, where $1\leq i\leq \lambda$.
\item Any pair of distinct points of $[\phi]$ is contained in at most one subset, i.e., $\left|\xi_{i}\cap\xi_{j}\right|\leq1$, $\forall i,j\in[n]$. 
\end{enumerate}

In the above definition, each subset corresponds to a specific storage node and the points in the subsets represent the indices of coded packets that are stored on the node. In other words, the node capacity is given by the cardinality of the subset. Furthermore, we note that $\Phi_{i}$ represents the packet set $P_{i}$, i.e., the points in $\Phi_{i}$ are the indices of coded packets in $P_{i}, 1\leq i\leq\lambda$. In this sense, we obtain that
\begin{equation}
\sum_{i=1}^{n}|\xi_{i}|=\overset{\lambda}{\underset{j=1}{\sum}}\rho_{j}|\Phi_{j}|.
\end{equation}

We use $S_{\mathcal{C}}(k)$ to denote the number of guaranteed distinct packets of an HFR code $\mathcal{C}$ when contacting any $k$ nodes, i.e.,
\begin{equation}
S_{\mathcal{C}}(k)\triangleq\underset{K\subset[n],|K|=k}{\min}\bigl|\cup_{i\in K}\xi_{i}\bigr|.
\end{equation}

For a DSS with parameters $(n,k,d)$, $S_{\mathcal{C}}(k)$ denotes the maximum file size that can be stored using $\mathcal{C}$. By following the same approach as~\cite[Lemma 14]{key-8}, we can obtain the following upper bound on $S_{\mathcal{C}}(k)$,
\begin{equation}
\label{Bound}
S_\mathcal{C}(k)\leq\Big\lfloor\sum_{j=1}^\lambda|\Phi_j|\Big(1-\frac{\binom{n-\rho_j}{k}}{\binom{n}{k}}\Big)\Big\rfloor.
\end{equation}
The proof of (\ref{Bound}) is omitted. We refer to an HFR code as \textit{optimal} if it attains the upper bound.

\subsection{Incidence Matrix Viewpoint}

In this subsection, we provide an alternative viewpoint based on incidence matrix which completely characterizes the properties of HFR codes. Consider a $\lambda$-HFR code $\mathcal{C}=\{\xi_{1},\ldots,\xi_{n}\}$ of a symbol set $\{s_{1},\ldots,s_{\phi}\}$, we can construct a \textit{symbol-by-subset} incidence matrix $\mathbf{M}$ with its entries $m_{ij}$ given as
\[
m_{ij}=\begin{cases}
1, & s_{i}\in\xi_{j};\\
0, & otherwise.
\end{cases}
\]

Let $\bm{r}_{i},\bm{c}_{j}$ denote the rows and columns of $\mathbf{M}$ respectively, where $i\in[\phi],j\in[n]$. Note that the \textit{support} of a nonzero vector $\bm{o}=(o_{1},\ldots,o_{z})$, is the set of indices of its nonzero coordinates: $supp(\bm{o})=\{x|o_{x}\neq0\}$. In this sense, the first condition of HFR codes then translates to
\begin{equation}
\label{HFR_existence1}
|supp(\bm{c}_{j})|=d, 1\leq j\leq n.
\end{equation}

Assume that $\{T_{1},T_{2},\ldots,T_{\lambda}\}$ is a partition of $\{s_{1},\ldots,s_{\phi}\}$, where $T_{k}=\{s_{i}|i\in\Phi_{k}\},k\in[\lambda]$. Thus, the second condition in Definition~\ref{Def_HFR} is equivalent to that if $s_{i}\in T_{k}$,
\begin{equation}
|supp(\bm{r}_{i})|=\rho_{k}, 1\leq i\leq\phi.
\end{equation}

We use $\Lambda_{i}$ to denote the number of subsets containing $s_{i}$ and $\Lambda_{i,i'}$ to denote the number of subsets containing $s_{i},s_{i'}$ simultaneously, where $i,i'\in[\phi]$. 
Let $\mathbf{M}^{T}$ denote the transpose of $\mathbf{M}$. We thus have

\[
\mathbf{M}\mathbf{M}^{T}=\left[\begin{array}{cccc}
\Lambda_{1} & \Lambda_{1,2} & \cdots & \Lambda_{1,\phi}\\
\Lambda_{2,1} & \Lambda_{2} & \cdots & \Lambda_{2,\phi}\\
\vdots & \vdots & \ddots & \vdots\\
\Lambda_{\phi,1} & \Lambda_{\phi,2} & \cdots & \Lambda_{\phi}
\end{array}\right].
\]

Therefore, the third condition of HFR codes can be interpreted as
\begin{equation}
\label{HFR_existence2}
\Lambda_{i,i'}\leq1,\forall i,i'\in[\phi].
\end{equation}

We use the incidence matrix viewpoint to prove the necessary and sufficient conditions for HFR codes. Specifically, consider a design $\mathcal{D}$ with point set $\mathcal{V}$ and block set $\mathcal{B}$, we can construct a $\lambda$-HFR code by setting $[\phi]=\mathcal{V}$ and $\mathcal{C}=\mathcal{B}$, if the following theorem suffices.

\begin{thm}
The blocks in a design $\mathcal{D}=(\mathcal{V},\mathcal{B})$ form a $\lambda$-HFR code if and only if the incidence matrix of $\mathcal{D}$ satisfies conditions (\ref{HFR_existence1})\textendash{}(\ref{HFR_existence2}).
\end{thm}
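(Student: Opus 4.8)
The plan is to prove both directions at once by matching, one at a time, each of the three defining properties in Definition~\ref{Def_HFR} against the corresponding incidence-matrix condition. Identify the point set $\mathcal{V}$ with the symbol set $\{s_1,\ldots,s_\phi\}$ (so $\mathcal{V}=[\phi]$) and the block collection $\mathcal{B}$ with $\mathcal{C}=\{\xi_1,\ldots,\xi_n\}$, and let $\mathbf{M}$ be the resulting symbol-by-subset incidence matrix, with rows $\bm{r}_1,\ldots,\bm{r}_\phi$ and columns $\bm{c}_1,\ldots,\bm{c}_n$. The two facts I would use repeatedly, both immediate from the definition of $\mathbf{M}$, are that $supp(\bm{c}_j)$ equals the block $\xi_j$ (regarded as a subset of $[\phi]$) and that $supp(\bm{r}_i)$ equals the set of indices of blocks containing $s_i$.

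First I would dispatch the cardinality condition: since $|\xi_j|=|supp(\bm{c}_j)|$ for every $j\in[n]$, property~(1) of Definition~\ref{Def_HFR} holds exactly when $|supp(\bm{c}_j)|=d$ for all $j$, i.e.\ when (\ref{HFR_existence1}) holds. Next, fix the partition $T_k=\{s_i:i\in\Phi_k\}$ of the symbols induced by $\Phi_1,\ldots,\Phi_\lambda$; the number of blocks containing a symbol $s_i\in T_k$ is $|supp(\bm{r}_i)|$, so property~(2) — every point of $\Phi_k$ lying in exactly $\rho_k$ subsets — is equivalent to requiring $|supp(\bm{r}_i)|=\rho_k$ whenever $s_i\in T_k$, which is precisely the row-weight condition $|supp(\bm{r}_i)|=\rho_k$ displayed just before (\ref{HFR_existence2}).

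The only step carrying any content is the intersection condition. Note that $|\xi_j\cap\xi_{j'}|$ counts the symbols lying in both blocks, i.e.\ the number of rows in which $\bm{c}_j$ and $\bm{c}_{j'}$ simultaneously carry a $1$, whereas $\Lambda_{i,i'}$, the $(i,i')$ entry of $\mathbf{M}\mathbf{M}^{T}$, counts the blocks containing both $s_i$ and $s_{i'}$. I would argue the contrapositive via the elementary row\textendash{}column duality for a $0$\textendash{}$1$ matrix: some pair of distinct blocks $\xi_j,\xi_{j'}$ satisfies $|\xi_j\cap\xi_{j'}|\ge 2$, witnessed by two common symbols $s_i,s_{i'}$, if and only if some pair of distinct symbols $s_i,s_{i'}$ satisfies $\Lambda_{i,i'}\ge 2$, witnessed by the two common blocks $\xi_j,\xi_{j'}$. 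Negating, property~(3) — $|\xi_i\cap\xi_j|\le 1$ for all distinct $i,j$ — is equivalent to (\ref{HFR_existence2}). Conjoining the three equivalences shows that the blocks of $\mathcal{D}$ form a $\lambda$-HFR code if and only if $\mathbf{M}$ satisfies (\ref{HFR_existence1})\textendash{}(\ref{HFR_existence2}). I expect no genuine obstacle: the first two equivalences are bare unwindings of the definition of the incidence matrix, and the third rests only on the observation that the inner product of any two distinct columns of a $0$\textendash{}$1$ matrix is at most $1$ exactly when the inner product of any two distinct rows is at most $1$.
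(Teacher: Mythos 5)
Your proposal is correct and follows essentially the same route as the paper: both arguments reduce the theorem to the one-to-one correspondence between the three conditions of Definition~\ref{Def_HFR} and the column-weight, row-weight, and pairwise-intersection conditions (\ref{HFR_existence1})\textendash{}(\ref{HFR_existence2}), the paper simply leaving these translations to the preceding discussion of the matrix $\mathbf{M}$ while you spell them out (including the row\textendash{}column duality for the intersection condition) inside the proof itself. No gap; your version is just the more explicit rendering of the same argument.
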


\begin{IEEEproof}
Necessity: If there exists a $\lambda$-HFR code $\mathcal{C}$, the symbol-by-subset incidence matrix $\mathbf{M}$ of $\mathcal{C}$ should satisfy conditions (\ref{HFR_existence1})\textendash{}(\ref{HFR_existence2}). By setting $[\phi]=\mathcal{V}$ and $\mathcal{C}=\mathcal{B}$, then $\mathbf{M}$ corresponds to the incidence matrix $\mathbf{A}$ of the design $\mathcal{D}$. Thus, $\mathbf{A}$ subjects to conditions (\ref{HFR_existence1})\textendash{}(\ref{HFR_existence2}).

Sufficiency: If the incidence matrix of adopted design subjects to (\ref{HFR_existence1})\textendash{}(\ref{HFR_existence2}), the three conditions of $\lambda$-HFR codes will be satisfied from the construction. Hence, the designed code is a $\lambda$-HFR code by definition.
\end{IEEEproof}

\section{HFR Codes from Combinatorial Designs}

In this section, we first provide an explicit construction of $\lambda$-HFR codes from group divisible designs. We further propose a novel design framework based on
HFR codes.

\subsection{Code Constructions}

We start by presenting an example before introducing the code construction. Consider a 3-GDD of type $1^{4}3^{1}$, where the groups are partitioned as $\mathcal{G}=\{\{1\},\{2\},\{3\},\{4\},\{5,6,7\}\}$. Correspondingly, the block set is $\mathcal{B}=\{\{1,2,5\},$ $\{1,3,6\},\{1,4,7\},\{2,3,7\},\{2,4,6\},\{3,4,5\}\}$. We adopt a $(7,4)$ systematic MDS code to encode a data file of $M=4$ packets, where the systematic packets are output with indices $\{1,\ldots,4\}$ and parities are with indices $\{5,6,7\}$ respectively. Each systematic packet is then repeated $3$ times, while the parity packets are only repeated twice. As shown in Figure~\ref{Example_HFR}, if we take $\mathcal{C}=\mathcal{B}$, we can place these coded packets on $n=6$ nodes according to the $2$-HFR code $\mathcal{C}$.
Furthermore, we observe that any DC contacting $k=3$ nodes can obtain at least $6$ distinct packets. By using (\ref{Bound}), we can compute that $S_\mathcal{C}(3)\leq\lfloor6.2\rfloor=6$, which indicates that the $2$-HFR code $\mathcal{C}$ is optimal.

\begin{figure}
\centering{}\includegraphics[scale=0.172]{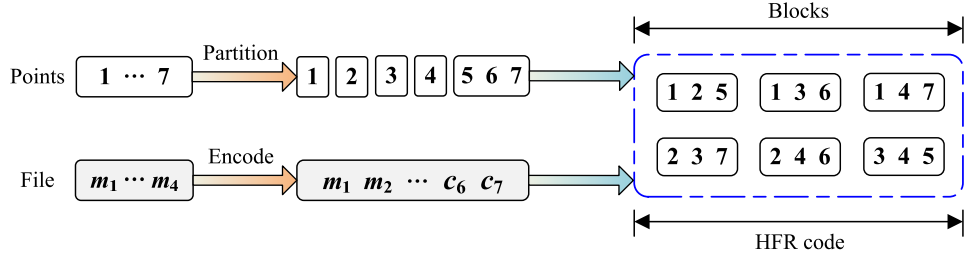}
\caption{A $2$-HFR code from 3-GDD of type $1^{4}3^{1}$.}
\label{Example_HFR}
\end{figure}

We present now explicit constructions of HFR codes from group divisible designs. Note that the nodes in the system, which correspond to blocks, are of the same storage capacity. We thus focus on GDDs with the same block size. The core idea behind the construction is that for a $\psi$-GDD of type $\theta_{1}^{u_{1}}\theta_{2}^{u_{2}}\cdots \theta_{s}^{u_{s}}$, the points from different groups occur in different number of blocks. If we view the points as the indices of coded packets and the blocks as storage nodes, we can obtain different repetition degrees.

\textbf{Construction:} Given a $\psi$-GDD of type $\theta_{1}^{u_{1}}\cdots \theta_{\lambda}^{u_{\lambda}}$ with block set $\mathcal{B}$, a $\lambda$-HFR code $\mathcal{C}$ can be obtained by taking $\phi=u_{1}\theta_{1}+\ldots+u_{\lambda}\theta_{\lambda}$ and $\mathcal{C}=\mathcal{B}$. Correspondingly, the $\lambda$ repetition degrees are given by
\begin{equation}
\rho_{l}=\frac{[\sum_{w=1}^{\lambda}u_{w}\theta_{w}]-\theta_{l}}{\psi-1},l=1,\ldots,\lambda.
\end{equation}

The constructed $\lambda$-HFR code can be implemented in a DSS with parameters $n=\sum_{i=1}^{\lambda}u_{i}\theta_{i}\rho_{i}/\psi$ and $d=\psi$. Note that our construction is based on the assumption that there exists a GDD of certain types, of which the existence results are presented in \cite[Part IV]{key-22}. For instance, Abel \textit{et al.} proved in \cite{key-23} the necessary conditions for the existence of a $5$-GDD of type $g^5m^1$:

\begin{lem}
A $5$-GDD of type $g^5m^1$ exists if $g\equiv 0\pmod 4$, $m\equiv 0\pmod 4$ and $m\leq 4g/3$, with the possible exceptions of $(g,m)=(12,4)$ and $(12,8)$.
\end{lem}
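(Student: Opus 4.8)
The plan is to recognise that the displayed conditions are exactly the \emph{necessary} conditions for such a GDD, so that the content of the lemma is their \emph{sufficiency}, and to dispose of the easy necessity remark first. In a $5$-GDD of type $g^5m^1$ a block meets each group in at most one point, so a point $x$ of a $g$-group is joined by pairs to $4g+m$ points; each such pair lies in a unique block and each of the $r_g$ blocks through $x$ absorbs four of them, whence $r_g=(4g+m)/4$, and similarly a point of the $m$-group lies in $r_m=5g/4$ blocks. Integrality of $r_g$ and $r_m$ forces $m\equiv 0\pmod 4$ and $g\equiv 0\pmod 4$. Fixing $x$ in a $g$-group once more, the $m$ pairs running from $x$ into the $m$-group occupy $m$ distinct blocks through $x$, so $m\le r_g=g+m/4$, that is $m\le 4g/3$; these are precisely the three hypotheses.

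For sufficiency I would use the two-stage recipe standard in design theory: build a finite family of small \emph{seed} designs directly, and reach every remaining admissible pair from the seeds by recursion. For the seeds I would fix $g$ below some explicit bound --- the admissible $m$ then form the finite list $0,4,8,\dots,\lfloor 4g/3\rfloor$ --- and look for a construction invariant under a rotational action on $\mathbb{Z}_{5g}$ whose index-$5$ subgroup has the five $g$-groups as its cosets, adjoining $m$ fixed ``infinity'' points as the last group. Existence then reduces to exhibiting base blocks whose list of differences covers $\mathbb{Z}_{5g}\setminus 5\mathbb{Z}_{5g}$ exactly once and meets each infinity point the correct number of times; some values yield to classical difference families and the rest to a short computer search. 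It is here that $(12,4)$ and $(12,8)$ are expected to defeat every admissible action, and since no recursion will reach them either, they must be recorded as possible exceptions.

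With the seeds in hand, the recursive stage would combine three familiar tools: (i) \emph{inflation} --- blowing up a $5$-GDD of type $g_0^5m_0^1$ by a transversal design $\mathrm{TD}(5,u)$ yields one of type $(g_0u)^5(m_0u)^1$ whenever enough mutually orthogonal Latin squares of order $u$ exist, i.e.\ for all but a handful of small $u$; (ii) \emph{Wilson's Fundamental Construction}, with a (possibly truncated) transversal design as master design, integer weights, and $5$-GDD ingredients on the blocks; and (iii) \emph{filling in a group} --- replacing one group of an auxiliary GDD by a smaller $5$-GDD to assemble the $m$-group of the target. Organising the case split modulo a convenient period in $g$, one checks that every admissible $(g,m)$ beyond the seed range is produced.

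I expect the ratio bound $m\le 4g/3$ to be the main obstacle. Inflation leaves $m/g$ unchanged and the Wilson and filling steps tend to \emph{decrease} the attainable ratio, so the near-extremal targets --- above all $m=4g/3$, which forces $12\mid g$ and the rigid configuration $r_g=m$ in which every block through a $g$-point meets the $m$-group --- cannot be built up from smaller designs and need a dedicated direct family, most naturally extracted from resolvable transversal designs or $\{5\}$-frames. Checking that the seed list is large enough for the recursions to sweep out all the remaining admissible pairs, and that exactly $(12,4)$ and $(12,8)$ escape, is the lengthy bookkeeping at the heart of the argument.
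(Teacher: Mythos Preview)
The paper does not prove this lemma at all: it is quoted verbatim as an existence result of Abel, Ge, Greig and Ling, and the surrounding text simply cites their paper. So there is no ``paper's own proof'' to compare against; the authors treat the lemma as an imported black box that certifies a supply of $5$-GDDs for their HFR construction.

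Your proposal, by contrast, sketches what a proof of such a result actually looks like. The necessity paragraph is correct and cleanly done: the replication counts $r_g=(4g+m)/4$ and $r_m=5g/4$ are right, and the bound $m\le 4g/3$ via ``every block through a $g$-point meets the $m$-group in at most one point'' is the standard argument. The sufficiency plan --- a finite stock of seed designs found by difference methods or computer search, then inflation by transversal designs, Wilson's Fundamental Construction, and filling groups to cover all remaining admissible $(g,m)$ --- is exactly the template used in the Abel et al.\ paper and throughout the PBD/GDD literature. Your remark that the near-extremal ratio $m=4g/3$ resists recursion and needs a dedicated direct family is also on target.

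What you have is a credible outline, not a proof: the real work in such results is the explicit catalogue of base blocks for the seeds and the case-by-case verification that the recursions close up, and that bookkeeping is precisely what the cited paper supplies. For the purposes of \emph{this} paper, though, a citation is all that is required, and your sketch goes well beyond that.
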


Further, since any two nodes intersect at not more than one point, we thus obtain the following theorem.
\begin{thm}
The $\lambda$-HFR codes designed by the Construction above achieve the system storage capacity $C_{MBR}$ under the random access repair model.
\end{thm}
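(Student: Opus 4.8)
The plan is to prove that the constructed code $\mathcal{C}$ satisfies $S_{\mathcal{C}}(k)=kd-\binom{k}{2}$; since this equals the file size $C_{MBR}$ of (\ref{capacity}), it shows that $\mathcal{C}$, together with an outer $(\phi,M)$ MDS code of dimension $M=kd-\binom{k}{2}$, lets \emph{any} $k$ nodes recover the entire file, i.e. it achieves the system storage capacity under random access. Because $C_{MBR}$ is by definition the maximum file size deliverable by contacting any $k$ of the $n$ nodes at the MBR point (a cut-set bound valid for every code in an $(n,k,d)$ DSS with $\beta=1$), we automatically have $S_{\mathcal{C}}(k)\le C_{MBR}$, so the task reduces to establishing the reverse inequality $S_{\mathcal{C}}(k)\ge kd-\binom{k}{2}$.

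Fix an arbitrary index set $K\subseteq[n]$ with $|K|=k$. The Bonferroni inequality (inclusion--exclusion truncated after the pairwise terms) gives
\[
\Big|\bigcup_{i\in K}\xi_i\Big|\ \ge\ \sum_{i\in K}|\xi_i|\ -\ \sum_{\{i,j\}\subseteq K}|\xi_i\cap\xi_j|.
\]
By the Construction every $\xi_i$ is a block of the underlying $\psi$-GDD, hence $|\xi_i|=\psi=d$ and the first sum equals $kd$. For the second sum I would invoke condition 3 of Definition~\ref{Def_HFR}, namely $|\xi_i\cap\xi_j|\le 1$ for all $i\neq j$ (equivalently, condition (\ref{HFR_existence2})): each of the $\binom{k}{2}$ pairs contributes at most $1$, so the subtracted term is at most $\binom{k}{2}$. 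Therefore $\big|\bigcup_{i\in K}\xi_i\big|\ge kd-\binom{k}{2}$ for every such $K$, and taking the minimum over $K$ yields $S_{\mathcal{C}}(k)\ge kd-\binom{k}{2}$. Combining with the upper bound gives $S_{\mathcal{C}}(k)=C_{MBR}$, as desired.

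The one step that needs genuine justification — and hence the main obstacle — is the intersection bound $|\xi_i\cap\xi_j|\le 1$, i.e. that two distinct blocks of the GDD never share two points. I would argue this straight from the definition of a $\psi$-GDD: a block meets each group in at most one point, so any two points lying in a common block must come from distinct groups, and by property 3 of GDDs such a pair occurs in a \emph{unique} block; hence no two blocks can share two points. This is precisely the verification that the incidence matrix of the GDD obeys (\ref{HFR_existence2}), so the Construction does output a legitimate $\lambda$-HFR code; every other quantity involved ($d=\psi$, the degrees $\rho_l$ obtained from (\ref{frequency}), and $n=\sum_i u_i\theta_i\rho_i/\psi$) is routine bookkeeping recorded already in the statement of the Construction and needs no further argument.
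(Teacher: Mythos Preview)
Your proof is correct and follows essentially the same route as the paper: both hinge on the observation that two distinct blocks of a $\psi$-GDD meet in at most one point (you justify this step more carefully than the paper does), from which the bound $\big|\bigcup_{i\in K}\xi_i\big|\ge kd-\binom{k}{2}$ follows by the pairwise Bonferroni inequality, exactly the ``at most $\binom{k}{2}$ repeated packets'' count in the paper. One minor remark: the theorem only asserts \emph{achievability} of $C_{MBR}$, so the lower bound $S_{\mathcal{C}}(k)\ge kd-\binom{k}{2}$ is all that is required; your invocation of the cut-set bound to force the reverse inequality on the purely combinatorial quantity $S_{\mathcal{C}}(k)$ is not quite the right tool (the cut-set bound caps the recoverable file size, not the union size of $k$ blocks), but it is also unnecessary for the conclusion.
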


\begin{proof}
 From the properties of group divisible designs, we have that any pair of points from different groups is contained in exactly one block. In other words, any two distinct nodes can not share more than one packet. Therefore, there are at most $\tbinom{k}{2}$ repeated packets when contacting $k$ nodes of the system, which suggests that HFR codes obtained by the Construction achieve the storage capacity $C_{MBR}$.
\end{proof}

\begin{rem}
The metric of repair alternativity is introduced in~\cite{key-24}, which measures the number of different subsets of nodes that enable the recovery of a failed node. We note that an erasure code with large repair alternatives increases the probability to recover lost data even in the presence of multiple failures.
\end{rem}

Due to the heterogeneity in repetition degrees, the repair alternativity of HFR codes depends on the packets that stored on the failed node. Without loss of generality, we assume that the repetition degrees of the lost packets are $\rho_{1},\ldots,\rho_{d}$ respectively. Since any pair of distinct points is contained in exactly one group or one block, the remaining replicas of all lost packets are spread over different nodes such that there will not exist a surviving node that contains one pair of the remaining replicas. Therefore, the repair alternativity of the constructed $\lambda$-HFR code is $\prod_{i=1}^{d}(\rho_{i}-1)$.

For an $(n,k,d)$ DSS that implements HFR codes, the repair alternativity of each node can reach a high level as node storage capacity and packet repetition degree increase.

\subsection{A Novel Design Framework}

In practical storage systems, it is desirable to have the systematic feature of MDS codes, for that the original file can be read directly from the uncoded copy without decoding. In this sense, we can always adopt systematic MDS code as the outer code. Recall that a DC connecting to any $k$ nodes can recover the source file. However, the reconstruction overheads can vary over a wide range, depending on which specific set of $k$ nodes have been contacted. We illustrate this finding by the following example.

Consider a systematic $(10,6)$ MDS code with $6$ systematic packets of indices $\{1,\ldots,6\}$ and $4$ parity packets of indices $\{7,\ldots,10\}$ respectively. Each systematic packet is repeated $4$ times, while each parity packet is only replicated $3$ times. We distribute these coded packets over $n=12$ nodes according to the $2$-HFR code:
\[
\{1,3,7\},\{1,4,8\},\{1,5,9\},\{1,6,10\},\{2,6,7\},\{2,5,8\},
\]
\[
\{2,3,9\},\{2,4,10\},\{4,5,7\},\{3,6,8\},\{4,6,9\},\{3,5,10\}.
\]

We label the nodes by $N_{1},\ldots,N_{12}$ respectively, from left to right and top to bottom. It can be observed that any DC who connects to $k=3$ nodes can obtain at least 6 distinct packets, which are adequate to retrieve the source file. However, if he connects to the $3$ nodes $N_{1},N_{5}$ and $N_{9}$, then he will get all the $6$ systematic packets and no additional operations are needed. While contacting nodes $N_{1},N_{3}$ and $N_{12}$, he will only obtain 3 systematic packets. Therefore, it is necessary to perform some decoding operations on the obtained parity packets, such as matrix inversion and linear combinations.

\begin{table}
\caption{Simulation Results}
\centering{}%
\begin{tabular}{c|c|c|c}
\hline
$N(s)$ & $N(p)$ & Occurrence Number & Frequency\tabularnewline
\hline
\hline
$3$ & $3$ & $0.363\times10^{7}$ & $0.0363\%$\tabularnewline
\hline
\multirow{2}{*}{$4$} & $2$ & $1.636\times10^{7}$ & \multirow{2}{*}{$0.4909\%$}\tabularnewline
\cline{2-3}
 & $3$ & $3.273\times10^{7}$ & \tabularnewline
\hline
\multirow{2}{*}{$5$} & $2$ & $3.273\times10^{7}$ & \multirow{2}{*}{$0.4364\%$}\tabularnewline
\cline{2-3}
 & $3$ & $1.091\times10^{7}$ & \tabularnewline
\hline
\multirow{2}{*}{$6$} & $1$ & $0.182\times10^{7}$ & \multirow{2}{*}{$0.0364\%$}\tabularnewline
\cline{2-3}
 & $3$ & $0.182\times10^{7}$ & \tabularnewline
\hline
\end{tabular}
\label{simulation}
\end{table}

\textbf{Simulation:} We simulate the behavior of a DC and pick a set of $k=3$ distinct nodes at random from the $12$ nodes. Among all the distinct packets obtained from the contacted nodes, we use $N(s),N(p)$ to denote the number of systematic packets and parity packets respectively. In the simulation, we randomly contact $3$ nodes $10^{8}$ times. From Table~\ref{simulation}, we observe that a DC can obtain all the $6$ systematic packets (regardless of the number of parity packets) with less than $4\%$ probability. In other words, the DC needs to complete the decoding operations with more than $96\%$ probability.

\begin{figure}
\begin{centering}
\includegraphics[scale=0.15]{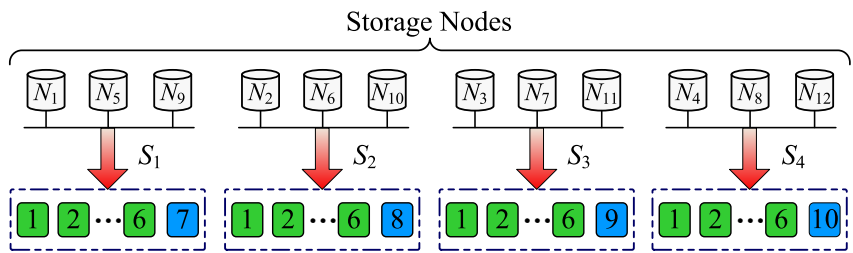}
\par\end{centering}
\caption{All the storage nodes are partitioned into $4$ virtual clusters.}
\label{cluster}
\end{figure}

The simulation results indicate the inefficiency of data reconstruction under the random access model adopted in the literature. Motivated by this, we carefully partition the above $12$ nodes into $4$ virtual clusters $S_{1},\ldots,S_{4}$, where each cluster contains one copy of the systematic packets, as depicted in Figure~\ref{cluster}. In this sense, a data collector can readily reconstruct the source file without decoding if he connects to any one of the four clusters.

It is clear that the cluster-based nodes partition model can reduce the data reconstruction time as compared to the random access model. We note that the partition may allude to some practical implications. For instance, the source file may be mainly needed by clients from several different geographical regions in practical systems. If we deploy the nodes from each cluster into each region, the clients can read the file in a simple manner. Moreover, since practical deployments always contain a track server storing the system metadata, we can write the cluster details as well as the repair table~\cite{key-8} into the metadata, which can be read directly to facilitate both data reconstruction and node recovery.

In general, the $n$ nodes of an HFR code can be partitioned into $k$-sized clusters if each cluster contains all the $M$ original packets. Suppose that $n=tk,t\geq2$ and the data packets are indexed by $\{1,\ldots,M\}$.

\begin{prop}
Given a $\lambda$-HFR code $\mathcal{C}=\{\xi_{1},\ldots,\xi_{n}\}$, let $H_{1},H_{2},\ldots,H_{t}$ be a partition of $\mathcal{C}$, where $H_{i}=\{\xi_{i_1},\ldots,\xi_{i_k}\},1\leq i\leq t$. Let $D_{i}$ denote the collection of distinct points in $H_{i}$, i.e., $D_{i}=\bigcup_{\xi\in H_{i}}\xi$. If $[M]\subset D_{i}$, the $k$ nodes in each $H_{i}$ can be viewed as a cluster.
\end{prop}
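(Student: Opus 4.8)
The plan is to unwind the two ingredients that make the word ``cluster'' meaningful in this subsection: the systematic outer code and the set-covering hypothesis $[M]\subset D_i$. First I would recall that in the design framework under discussion the outer code is a systematic $(\phi,M)$ MDS code, so the coded packets with indices $1,\ldots,M$ are precisely the $M$ source packets stored in \emph{uncoded} form, while the indices $M+1,\ldots,\phi$ are parity packets. Under the access model of the $(n,k,d)$ DSS, a data collector that contacts the $k$ nodes corresponding to $\xi_{i_1},\ldots,\xi_{i_k}$ retrieves exactly the packets whose indices lie in $D_i=\bigcup_{\xi\in H_i}\xi$; since $|H_i|=k$ by the partition hypothesis, this is a legitimate data-collection event.

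Next I would invoke the hypothesis $[M]\subset D_i$: it says that every uncoded source packet appears on at least one of the $k$ nodes of $H_i$. Hence a data collector connecting to the nodes of $H_i$ recovers all $M$ source packets directly, with no matrix inversion and no linear combinations of parity symbols --- i.e., $H_i$ behaves exactly like an independent plain copy of the file. This is precisely what it means for the $k$ nodes of $H_i$ to ``be viewed as a cluster'' in the operational sense introduced just before the proposition, so the claim follows. For completeness I would also note consistency with the capacity guarantee of the HFR code: $[M]\subset D_i$ forces $|D_i|\ge M$, which is in line with $S_{\mathcal{C}}(k)\ge M$ required for $\mathcal{C}$ to store a file of $M$ packets, so the cluster condition imposes no new feasibility constraint on the DSS.

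The argument is short and there is no serious obstacle; the only care needed is (i) to state explicitly that a ``cluster'' is defined here as a set of $k$ nodes from which the file can be read without any decoding, so that the proof is a verification that $[M]\subset D_i$ is \emph{sufficient} for this property, and (ii) to keep separate the existence question --- whether, for a given HFR code, a partition $H_1,\ldots,H_t$ with $[M]\subset D_i$ for every $i$ actually exists --- which is not asserted by the proposition and in general depends on the particular code (as witnessed by the $(10,6)$ example, where such a $4$-cluster partition happens to exist). I would therefore phrase the proof as a direct verification and, if desired, append a one-line remark pointing back to that worked example as evidence that the condition is attainable in practice.
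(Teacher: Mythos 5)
Your proposal is correct and matches the paper's intent: the paper in fact gives no proof at all, treating the proposition as an immediate consequence of the definitions (a ``cluster'' being a set of $k$ nodes from which all $M$ systematic packets can be read directly), and your direct verification that $[M]\subset D_i$ suffices is exactly the intended reading. Your added care in separating the sufficiency claim from the existence of such a partition is a sensible clarification but not a departure from the paper's approach.
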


We present now the necessary conditions that an HFR code constructed from GDDs can be partitioned with the desirable reconstruction property. Note that there are two types of packets in a systematic MDS code, we thus focus on the case that $\lambda=2$. We further assume that each node contains exactly one parity packet and nodes with the same parity packet are included in the same cluster.

Consider a systematic MDS code with parameters $(a\theta_{1}+\theta_{2},a\theta_{1})$, where the systematic symbols are labeled by $\{1,\ldots,a\theta_{1}\}$ and parity packets by $\{a\theta_{1}+1,\ldots,a\theta_{1}+\theta_{2}\}$ respectively. In our construction, we adopt a class of $\psi$-GDDs of type $\theta_{1}^{a}\theta_{2}^{1}$. The points from the groups of size $\theta_{1}$ give the indices of the systematic symbols and points in the group of size $\theta_{2}$ will represent parities. We assume that systematic symbols are of the higher repetition degrees, which implies that $\theta_{1}<\theta_{2}$. Since each cluster contains one copy of the systematic packets and one distinct parity packet, it is clear that the repetition degree of each systematic packet is equal to the number of parity packets, i.e.,
\begin{equation}
\theta_{2}=\frac{(a-1)\theta_{1}+\theta_{2}}{\psi-1}.
\end{equation}

The above $2$-HFR codes are constructed assuming the existence of $\psi$-GDDs of desired parameters. We note that the necessary conditions for $\psi$-GDDs of type $\theta_{1}^{a}\theta_{2}^{1}$ for small block sizes (i.e., $\psi=3,4,5$), are given in~\cite{key-22}. If we evaluate all the possible group types, then we have the appropriate designs of small order, such as 3-GDD of type $\{1^{4}3^{1},2^{3}4^{1},1^{6}5^{1}\}$ and 4-GDD of type $1^{9}4^{1}$, etc.

\section{Conclusion}

In this paper, we generalize the application of FR codes and propose HFR codes, in which the repetition degrees of coded packets are different. We present explicit constructions of HFR codes derived from group divisible designs. We show that the constructed HFR codes achieve the system storage capacity under random access repair and have multiple repair alternatives upon node failure. Further, we present a novel design framework based on HFR codes where storage nodes can be wisely partitioned into several clusters. The proposed framework in conjunction with the systematic feature of MDS codes can significantly reduce the data reconstruction time. Exploring constructions of HFR codes from other combinatorial structures is an interesting direction for future research.

\section*{Acknowledgment}

This work is supported in part by the National Basic Research 973 Program of China under Grant No. 2012CB315904, National Natural Science Foundation of China under Grant No. 61179028, Natural Science Foundation of Guangdong under Grant S2013020012822, Shenzhen Basic Research Project under Grant JCYJ20130331144502026, and a grant from the University Grants Committee of the Hong Kong Special Administrative Region, China (Project No. AoE/E-02/08).

\end{spacing}

\end{document}